\newtheorem{thm}{Theorem}[section]
\theoremstyle{remark}
\theoremstyle{definition}
\newtheorem{defn}[thm]{Definition}
\newtheorem{prop}{Proposition}
\newtheorem{cor}{Corollary}
\newtheorem{rmk}{Remark}
\newcommand{\reals}{\mathbb{R}}
\newcommand{\complex}{\mathbb{C}}
\newcommand{\integers}{\mathbb{Z}}
\newcommand{\gl}[1]{\text{GL}^+(#1)}
\newcommand{\glt}[1]{\widetilde{\text{GL}^+}(#1)}
\newcommand{\so}[2]{\text{SO}_{#2}(#1)}
\newcommand{\spin}[2]{\text{Spin}_{#2}(#1)}
\newcommand{\GL}[1]{\text{GL}_{#1}^{+}}
\newcommand{\GLt}[1]{\widetilde{\text{GL}_{#1}^{+}}}
\newcommand{\SO}[1]{\text{SO}_{#1}}
\newcommand{\Spin}[1]{\text{Spin}_{#1}}
\newcommand{\OO}[1]{\text{O}_{#1}}
\newcommand{\GLL}[1]{\text{GL}_{#1}}
\newcommand{\HH}{\mathcal{H}}
\newcommand{\zedtwo}{\integers/2\integers}
\newcommand{\clifford}{\complex \text{l}}
\newcommand{\vol}[2]{\text{vol}_{#2}(#1)}
\newcommand{\diff}[1]{\mathop{\text{Diff}^+(#1)}}
\newcommand{\diffo}[1]{\mathop{\text{Diff}_0^+(#1)}}
\newcommand{\diffspin}[2]{\text{Diff}^+_{#2}(#1)}
\newcommand{\diffspint}[2]{\widetilde{\text{Diff}}^+_{#2}(#1)}
\newcommand{\met}[1]{\text{Met}(#1)}
\begin{document}

\title{Dirac operator on spinors and diffeomorphisms}
\date{\today}
\author{Ludwik D\k{a}browski}
\email{dabrow@sissa.it}
\author{Giacomo Dossena}
\email{dossena@sissa.it}
\affiliation{SISSA - Via Bonomea 265 - 34136, Trieste - Italy}

\begin{abstract}

The issue of general covariance of spinors and related objects is reconsidered. Given an oriented  manifold $M$, to each spin structure $\sigma$ and Riemannian metric $g$ there is associated a space $S_{\sigma, g}$ of spinor fields on $M$ and a Hilbert space $\HH_{\sigma, g}= L^2(S_{\sigma, g},\vol{M}{g})$ of $L^2$-spinors of $S_{\sigma, g}$. The group $\diff{M}$ of orientation-preserving diffeomorphisms of $M$ acts both on $g$ (by pullback) and on $[\sigma]$ (by a suitably defined pullback $f^*\sigma$). Any $f\in \diff{M}$ lifts in exactly two ways to a unitary operator $U$ from $\HH_{\sigma, g} $ to $\HH_{f^*\sigma,f^*g}$. The canonically defined Dirac operator is shown to be equivariant with respect to the action of $U$, so in particular its spectrum is invariant under the diffeomorphisms.  
\end{abstract}

\keywords{Spin structures, Spinors, Dirac operator, Diffeomorphisms}

\maketitle

\vskip -0.7cm
\hskip 0.95cm {\scriptsize MSC2010: 53C27, 15A66, 34L40, 57S05.}

\section{Introduction}
In this note the issue of general covariance of spinor fields (for brevity: spinors) and related objects is reconsidered. This question has in fact at least two aspects regarding the transformation rules with respect to two different (though intrinsically related) operations: a change of coordinate system, and a diffeomorphism. In physics literature one can sometimes find statements like ``spinors transform as `spinors' with respect to the former and as scalars with respect to the latter''. While these statements can in a certain sense be justified, they are meaningful only after introducing certain mathematical structures and determining their transformation properties, as we shall explain in the next sections.

Even though in principle one usually works with vector (bilinear) or tensor (multilinear) combinations of spinors, or even with invariants (scalars) like the Lagrangian, a transformation rule of spinor fields is really needed if one wants to treat them as independent variables (e.g. with respect to some variational principle).

However a subtlety with spinors, as compared to tensors, is that one needs to work with particular double covers of the groups we are accustomed to in the case of tensors. The global mathematical constructs needed for this task have been developed in the second half of the last century \cite{Hae56}, \cite{BorHir59}, \cite{Mil63}, via the notion of spin structure. The notion of a spin structure $\sigma$ is topological in nature, but for our purposes it is here considered as an auxiliary tool to the definition of spinors and, as such, it requires a Riemannian metric $g$ to be specified on a given (oriented) smooth manifold $M$. More precisely one needs a prolongation of the principal $\SO{n}$ bundle $\so{M}{g}$ of oriented $g$-orthonormal frames to the group $\Spin{n}$. Then there is the associated space of smooth spinor fields $S_{\sigma, g}$ and the Hilbert space $\HH_{\sigma,g}= L^2(S_{\sigma,g},\vol{M}{g})$. It should be stressed that the notion of spin structure is not only sufficient, but in fact necessary for the consistency of the definition of spinor fields.

The question of the change of coordinates is then translated to the transformation rules under the change of a local orthonormal frame and corresponding change of the local spinor frame. We shall understand such a change as an automorphism of the tangent bundle, the related automorphism of the bundle of frames, and its lift to a spin structure. It should be mentioned that a large automorphism (i.e. not belonging to the connected component of the group of automorphisms) may require however a change of the spin structure $\sigma$. As far as a diffeomorphism is concerned, it is its derivative (tangent map) that plays the role of the automorphism in question.

In all these cases we shall be able to give a transformation rule of spinor fields, i.e. define a new spinor field. This new spinor field, unless the automorphism respects the metric (so the diffeomorphism is an isometry), will in general be a spinor field associated to a different metric, namely the pull back of the original metric. More precisely, we are able to give the components of the new spinor field with respect to the transformed frame (or more precisely transformed spinor frame). We should stress at this point that remaining solely in the aforementioned framework does not permit to describe the components of a given one and the same spinor field with respect to two linear frames which are orthonormal with respect to two different metrics\footnote{Actually conformally related frames can still be treated.}. This becomes possible however if the theory allows spinors with an infinite number of components (which carry a faithful representation of a double covering of the oriented general linear group). Such an extension is not usually appreciated (see however \cite{Sij75}, \cite{NeeSij79}).

As far as the group $\diff{M}$ of orientation preserving diffeomorphisms of $M$ is concerned, it acts both on $g$ (by a pull-back) and on $[\sigma]$ (by a suitably defined pull-back $f^*\sigma$). In this note we  show that any $f\in \diff{M}$ lifts (in exactly two ways) to a unitary operator from $\HH_{\sigma, g} $ to $\HH_{f^*\sigma, f^*g}$. This provides a kind of a unitary implementation on $\HH_{\sigma, g}$ of the action of a certain double covering $\diffspint{M}{\sigma}$ of the subgroup $\diffspin{M}{\sigma}$ of $\diff{M}$ preserving the spin structure $\sigma$, so in particular of the connected component of $\diff{M}$. Moreover we prove that the canonically defined Dirac operator is shown to be equivariant with respect to these actions, so in particular its spectrum is invariant under the diffeomorphisms.

In this paper we work with smooth (oriented) manifolds and use component-free notation, the usual spinor or vector indices can be easily inserted. We take the components of spinors are usual numbers, but our discussion applies in the anticommuting (Grassmann) case as well.

\section{Spinors}\label{sec:spinors}

We start by describing some algebraic structures behind spinors of a finite dimensional Euclidean space. They will be used to describe the structures on typical fibers of various bundles we shall encounter on manifolds (alternatively think of what happens at a point of a manifold).

Recall that a usual vector or tensor of $\reals^n$ of type $R$, where $R$ is some representation of $\GL{n}$ in $\reals^k$, can be viewed as a map from the space $F$ of oriented linear frames in $\reals^n$ to $\reals^k$, which is $\GL{n}$-equivariant i.e. it intertwines the canonical action of $\GL{n}$ on $F$ with $R$. Equivalently, given any (positive or negative) definite bilinear form (metric) $g$ on $\reals^n$, one can work with the space $F_g$ of oriented $g$-orthonormal frames in $\reals^n$, that carries a natural action of $\SO{n}$. Then, we can regard a vector (or a tensor) as a map from $F_g$ to $\reals^k$, which is equivariant under (restriction of) the representation $R$ to  $\SO{n}$.

For spinors one usually uses the (nontrivial) double cover $\rho\colon \Spin{n} \to \SO{n}$, and a free orbit $\tilde F_g$ of  $\Spin{n}$ (called space of `spinor frames' of $\mathbb{R}^n$) together with a 2:1 identification map $\eta \colon\tilde F_g \to F_g $, such that $\eta (\tilde e h) = \eta  (\tilde e) \rho (h)$, where $\tilde e\in\tilde F_g$ and $h\in \Spin{n}$. Given a representation $R\colon\Spin{n}\to \text{GL}(k,\complex)$ of $\Spin{n}$ we shall view a  $R$-spinor of $\mathbb{R}^n$ as an $R$-equivariant map $\psi$ from $\tilde F_g$ to $\complex^k$. There is an obvious $\complex$-linear structure on the space of $R$-spinors.

Obviously the interesting case here is when $R$ is not a tensor representation, i.e. does not descend to a representation of $\SO{n}$. This is the case e.g. for $R=\mu$, where $\mu$ is the restriction to $\Spin{n}$ of the fundamental (also called spin) representation of the Clifford algebra $\clifford_n$. The carrier complex space of $\mu$ has dimension $k=2^{n/2}$ for even $n$ and $k=2^{(n-1)/2}$ for odd $n$.

\begin{rmk}
Since $\Spin{n}$ is compact, by averaging over it we can consider any of its representations (hence also $\mu$) as being unitary with respect to a suitable hermitian inner product. In the case of the standard metric and the representation $\mu$, this is just the standard inner product on $\complex^{k}$.
\end{rmk}

Hereafter we fix the spin representation $R$ to be $\mu\colon\Spin{n}\to U(k)$, $k=2^{[n/2]}$, and consider  $\mu$-spinors $\psi:\tilde F_g\to  \complex^{k}$, i.e. $\psi(uh)=\mu(h^{-1})\psi(u), \forall u\in \tilde F_g, h\in \Spin{n}$. The inner product is given in terms of the standard inner product in $\complex^{k}$, as $({\psi,\phi}):=(\psi(u)\mid\phi(u))$ (the right hand side is independent of $u$).

In order to liberate the setting from the dependence on the metric a natural temptation would be to use the unique nontrivial for $n\geq 2$ (and universal for $n\geq 3$) double cover $\tau\colon\GLt{n}\to \GL{n}$. It extends the double cover $\rho$ and is a central extension of $\GL{n}$ by $\zedtwo$. Unfortunately  $\GLt{n}$ is not usually used as a `structure' group for spinors, for the reason that it is not a matrix group, i.e. it has only infinite-dimensional faithful representations, while geometric objects are usually assumed to have finite number of components. Instead, every finite-dimensional representation of $\GLt{n}$ descends to a (tensor) representation of $\GL{n}$, at least for $n\geq 3$ (see Lemma 5.23 in \cite{LawMic89}). Thus we have to stick to the subgroup $\Spin{n}$ and so the space of spinors will be always labelled by a metric. Concretely, a spinor labelled by a metric $g$ will be a $\mu$-equivariant map from the orbit $\tilde{F_g}:=\eta^{-1} (F_g)\subset \tilde{F}$ of $\Spin{n}$ to $\mathbb{R}^k$. 
We shall however employ $\GLt{n}$, as well as its  free orbit space $\tilde{F}$ together  with a 2:1 covering map $\eta :\tilde F \to F $, that intertwines the relative actions, 
in order to define the transformation of spinors under an oriented automorphism $\beta$ of $\mathbb{R}^n$. More precisely we can and shall lift $\beta$ to an automorphism $\tilde\beta$ of $\tilde{F}$ and define the transformed spinor as 
$$\psi' = \psi \circ  \tilde\beta_g,$$
where $\tilde\beta_g$ is the restriction of $\tilde\beta$ to $\widetilde{F_{\beta^*g}}$.
The domain of $\psi'$, understood as an equivariant map, is $\widetilde{F_{\beta^*g}}$.
Clearly the new spinor $\psi'$ is labelled by the pullback metric $g' = \beta^*\, g$. 
Note that the components of $\psi'$ with respect to the spinor basis $\tilde{e}'$ are equal to the components of $\psi$ with respect to the spinor basis $\tilde e=\tilde\beta(\tilde e')$, i.e. $\psi'(\tilde e')=(\psi\circ \tilde\beta)(\tilde e')=\psi(\tilde e)$. Moreover, since for any $\beta$ there are precisely two lifts $\tilde\beta$ (which differ just by a sign) we get actually a double covering $\widetilde{Aut^+(\mathbb{R}^n)}\cong \GLt{n}$ of the group $Aut^+(\mathbb{R}^n) \cong\text{GL}^+_n$ that acts on spinors.

In the next sections we shall globalize the structures described so far.

\section{Spin structures without metric}
In the literature the notion of a spin structure is usually formulated for a Riemannian manifold $(M,g)$ in terms of a principal $\Spin{n}$-bundle over $M$ double covering the bundle $\so{M}{g}$ of oriented $g$-orthonormal frames of $M$. Since in the following we shall vary the metric $g$, we use another though topologically equivalent definition \cite{DabPer86}, \cite{Dab88}, that is better suited for that purpose. First, let us fix notation. Given an oriented smooth $n$-manifold $M$ we denote by $\gl{M}\to M$ (or simply by $\gl{M}$) the principal $\GL{n}$-bundle of oriented frames of $M$.
\begin{defn}\label{milnor}
A spin structure on $M$ is a $\GLt{n}$-prolongation of $\gl{M}$, that is a pair $\sigma=(\glt{M},\eta)$ where $\glt{M}$ is a principal $\GLt{n}$-bundle over $M$ and the map $\eta\colon\glt{M}\to\gl{M}$ makes the following diagram commute:
\begin{displaymath}
	\xymatrix{	\glt{M}\times\GLt{n} \ar[r] \ar[dd]^{\eta\times\tau} &\glt{M} \ar[dd]^\eta \ar[dr]&\\
					&&M\\
					\gl{M}\times\GL{n} \ar[r] &\gl{M} \ar[ur]&}
\end{displaymath}
Two spin structures $(\glt{M}_1, \eta_1)$ and $(\glt{M}_2, \eta_2)$ are equivalent if there is a principal $\GLt{n}$-morphism $m\colon \glt{M}_1\to\glt{M}_2$ such that the following diagram commutes:
\begin{displaymath}
	\xymatrix{	\glt{M}_1 \ar[rr]^{m} \ar[dr]^{\eta_1} & & \glt{M}_2 \ar[dl]_{\eta_2}\\
					&\gl{M}&}
\end{displaymath}
\qed
\end{defn}
Importantly we note that given a $\GLt{n}$-prolongation $(\glt{M},\eta)$ and introducing a metric $g$ on $M$, we immediately obtain the usual definition of a spin structure as a $\Spin{n}$-prolongation of $\so{M}{g}$. Namely it suffices to consider the subbundle $\eta^{-1}(\so{M}{g})\subset \glt{M}$ with prolongation map $\eta_g\equiv\eta\restriction{\eta^{-1}(\so{M}{g})}$, the $\Spin{n}$-action being obtained by restriction of the $\GLt{n}$-action (see \cite{DabPer86}).

Conversely, any usual spin structure can be extended to a $\GLt{n}$-prolongation and defines a (metric independent) spin structure in our sense. It is also easy to see that both these processes preserve the notions of equivalence of spin structures. 

We refer to \cite{Swi93} for a nice survey about the two different definitions and their equivalence at the topological level. Hereafter we stick mainly to our notion of (metric independent) spin structure. An oriented manifold $M$ is called spin if $w_2(M)=0$ and sometimes we shall understand by this term a pair $(M,\sigma)$ with a given spin structure $\sigma$ on $M$.

\begin{rmk}
A word of caution must be given: in some texts by a spin structure an 
equivalence class is understood of spin structure in our sense, i.e. $\GLt{n}$- or $\Spin{n}$-prolongations (see e.g. \cite{Mil65} and \cite{HusJoaJur08} p.61). Moreover, sometimes it is not clearly stated if a prolongation is meant or rather an equivalence class of prolongations, though this may be grasped from context. Clearly this is crucial for the issue of proper parametrization, e.g. it is the set of equivalence classes of spin structures on $M$ which is known to be in bijective correspondence with $H^1(M,\zedtwo)$ see e.g. \cite{Dab88}. Note also the difference with the case of reductions of the structure group -- a reduction of a principal $G$-bundle $P$ to some subgroup $G'\subset G$ is a principal $G'$-subbundle $P'\subset P$. Two different reductions encode different information with respect to the inclusion into $P$, even though they might be equivalent as reductions (the equivalence is defined analogously as for prolongations). For instance two different $\OO{n}$-reductions of the principal $\GLL{n}$-bundle of frames over a manifold $M$ correspond to different Riemannian metrics on $M$, even though any two such reductions are equivalent.
\end{rmk}

\section{Spinor fields}\label{sec:spinorfields}
Let  $(M,g,\sigma)$ be a Riemannian spin manifold. Let $R\colon\Spin{n}\to \text{GL}(k,\complex)$ be a (fixed) representation of $\Spin{n}$. It is customary to call $R$-spinor field
on $M$ an $R$-equivariant map $\psi\colon \spin{M}{g}\to \complex^k$, where $\spin{M}{g}$ is the total space of the spin structure $\sigma$ (here by spin structure we temporarily mean a $\Spin{n}$-prolongation of $\so{M}{g}$, i.e. we need the metric $g$). By $R$-equivariance we mean that $\psi(ug)=R(g^{-1})\psi(u)$ for $u\in \spin{M}{g}$ and $g\in\Spin{n}$. As in Section \ref{sec:spinors} we are interested in those $R$ that are not  tensor representations, e.g. in the unitary Dirac representation $\mu$ of $\Spin{n}$ in the complex space of dimension $k=2^{[n/2]}$. We denote by $S_{\sigma,g}$ the space of $\mu$-spinor fields, often named Dirac spinor fields, for the spin structure $\sigma $ and the metric $g$. There is an obvious $\complex$-linear structure on $S_{\sigma,g}$ induced by pointwise operations. Note that  for different metrics we have a priori different spaces $S_{\sigma,g}$ (see \cite{Swi93} for a geometric description of a configuration space for both spinors and metrics).

An inner product on $S_{\sigma,g}$ can be defined as follows: take a cover $\{(U_\alpha, h_\alpha)\}_{\alpha\in A}$ of $M$ which trivializes $\spin{M}{g}$. Given $\psi,\phi\in S_{\sigma,g}$ consider the global function $a_{\psi,\phi}\colon M\to\complex$ defined locally by $$a_{\psi,\phi}(x):=(\psi(u_x)\mid\phi(u_x))$$ where $u\colon U_\alpha\to \spin{M}{g}(U_\alpha)$ is any local section of $\spin{M}{g}$ (in writing $u$ we omit the dependency on the index $\alpha$ to simplify notation). Consider the global (yet locally defined, $\alpha$-dependency omitted) $n$-form $e^1\wedge \cdots\wedge e^n$ where $e^j$ is the $g$-dual of $e_j:=\eta_g(u_j)$. Finally put: $$\langle\psi\mid\phi\rangle_{\sigma, g}:=\int_M a_{\psi,\phi} \;e^1\wedge\cdots\wedge e^n\quad.$$
It is easy to see that the above definition does not depend on the trivialization. Note that $e^1\wedge \cdots\wedge e^n=\vol{M}{g}$, the $g$-volume form of $M$.

We then make the following definition.

\begin{defn}
The Hilbert space of spinors $\HH_{\sigma,g}$ for a given spin structure $\sigma$ and metric $g$ is the $L^2$-completion of the inner product space $(S_{\sigma,g},\langle~\mid~\rangle_{\sigma, g})$.
\end{defn}

It is natural to investigate what happens to $\HH_{\sigma, g}$ under a change of spin structure. For equivalent spin structures the answer is given by the next proposition.

\begin{prop}
If we choose an equivalent $\GLt{n}$-prolongation $\sigma'=(\glt{M}',\eta')$, the principal $\GLt{n}$-isomorphism $m\colon \glt{M}'\to\glt{M}$ induces a unitary operator $U\colon\HH_{\sigma,g}\to\HH_{\sigma',g}$ given by $$U\psi=\psi\circ m_g$$ where $m_g=m\restriction{\eta'^{-1}(\so{M}{g})}$.
\end{prop}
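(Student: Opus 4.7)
The plan is to build $U$ on the smooth pre-Hilbert space $S_{\sigma,g}$ as a linear isometry onto $S_{\sigma',g}$, and then extend by continuity to the $L^{2}$-completions. The key inputs are (i) the defining intertwining $\eta\circ m=\eta'$ and (ii) the fact that both the fiberwise Hermitian pairing and the volume form entering the inner product depend only on $g$ and the underlying $\SO{n}$-frame, never directly on the prolongation.

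\textbf{Step 1: $m_{g}$ makes sense and preserves equivariance.} From $\eta\circ m=\eta'$, any $u\in\eta'^{-1}(\so{M}{g})$ satisfies $\eta(m(u))=\eta'(u)\in\so{M}{g}$, so $m_g$ indeed maps $\eta'^{-1}(\so{M}{g})$ into $\eta^{-1}(\so{M}{g})$. Since $m$ is $\GLt{n}$-equivariant and $\Spin{n}\subset\GLt{n}$, the restriction $m_g$ is $\Spin{n}$-equivariant. Hence $\mu$-equivariance of $\psi$ propagates to $\psi\circ m_g$, giving $U\psi\in S_{\sigma',g}$. Linearity of $U$ is immediate.

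\textbf{Step 2: $U$ is an isometry.} This is the substantive step. Pick a trivializing cover $\{U_\alpha\}$ of $\glt{M}'$ with local sections $u'_\alpha\colon U_\alpha\to\eta'^{-1}(\so{M}{g})$ and set $u_\alpha:=m_g\circ u'_\alpha$; these are local sections of $\eta^{-1}(\so{M}{g})$ furnishing a trivializing cover for $\glt{M}$. Then
\begin{equation*}
a_{U\psi,U\phi}(x)=\bigl((U\psi)(u'_\alpha(x))\bigm|(U\phi)(u'_\alpha(x))\bigr)=\bigl(\psi(u_\alpha(x))\bigm|\phi(u_\alpha(x))\bigr)=a_{\psi,\phi}(x).
\end{equation*}
Moreover $\eta'_{g}(u'_\alpha)=\eta_g(m_g\circ u'_\alpha)=\eta_g(u_\alpha)$, so the associated orthonormal frames in $\so{M}{g}$ coincide; therefore their $g$-dual coframes $e^{j}$ and the $n$-form $e^{1}\wedge\cdots\wedge e^{n}=\vol{M}{g}$ appearing in the two integrals are literally the same. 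Integrating yields $\langle U\psi\mid U\phi\rangle_{\sigma',g}=\langle\psi\mid\phi\rangle_{\sigma,g}$.

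\textbf{Step 3: Invertibility and extension.} Because $m$ is a principal-bundle isomorphism, its inverse $m^{-1}$ is also an equivalence of $\GLt{n}$-prolongations, and $V\phi:=\phi\circ(m_g)^{-1}$ provides a two-sided inverse of $U$ at the level of smooth spinor fields. Hence $U\colon S_{\sigma,g}\to S_{\sigma',g}$ is a bijective linear isometry. Density of $S_{\sigma,g}\subset\HH_{\sigma,g}$ and $S_{\sigma',g}\subset\HH_{\sigma',g}$ then gives a unique extension to a unitary between the completions.

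\textbf{Main obstacle.} Nothing is deep here; the only point one has to be careful about is to route the computation through the \emph{matched} local sections $u_\alpha=m_g\circ u'_\alpha$, which simultaneously identifies the pointwise inner products and forces the two $g$-dual coframes to agree. Once this is done, the isomorphism of $\so{M}{g}$-bundles induced by $m_g$ makes the entire construction tautologically independent of the chosen trivialization.
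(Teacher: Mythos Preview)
Your proof is correct and follows essentially the same route as the paper: show that $a_{U\psi,U\phi}=a_{\psi,\phi}$ pointwise and integrate against the common volume form, with invertibility supplied by $(m_g)^{-1}$. You are simply more explicit than the paper about well-definedness of $m_g$, about matching the local sections so the underlying $g$-orthonormal frames agree, and about the density/extension step to the $L^2$-completions.
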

\begin{proof}
The operator $U$ is clearly linear. It is invertible with inverse given by $U^{-1}\psi=\psi\circ (m_g)^{-1}$. To prove unitarity let us put $\psi'=\psi\circ m_g$. It is now enough to observe that $a_{\psi',\phi'}=a_{\psi,\phi}$. From this we obtain $\langle U\psi\mid U\phi\rangle=\int_M a_{\psi',\phi'}\vol{M}{g}=\int_Ma_{\psi,\phi}\vol{M}{g}=\langle \psi\mid\phi\rangle$.
\end{proof}
\begin{rmk}\label{rmk:twolifts}
Given two equivalent $\GLt{n}$-prolongations of $\GL{M}$, there are exactly two distinct principal isomorphisms between the two prolongations (this is a consequence of the morphism $\tau$ being a central extension of $\GL{n}$ by $\zedtwo$). It follows that there is another unitary operator $U^-\colon\HH_{\sigma,g}\to\HH_{\sigma',g}$, given by $U\psi=\psi\circ m_g^-$ where $m_g^-u=(mu)(-1)$ with $\{\pm1\}=\ker\tau\subset Z(\GLt{n})$. Clearly, once the existence of an isomorphism $\HH_{\sigma,g}\to\HH_{\sigma',g}$ has been established, any other isomorphism can be obtained by composing with a suitable automorphism of $\HH_{\sigma,g}$. However, the operators $U$ and $U^-$ are the only two arising from principal morphisms as indicated above.
\end{rmk}

\section{Diffeomorphisms and spin structures}

We now study the interplay between orientation-preserving diffeomorphisms of $M$ and spin structures on $M$. Given a spin manifold $(M,\sigma)$ where $\sigma=(\glt{M},\eta)$, let us choose an orientation-preserving diffeomorphism $f\colon M\to M$ and consider the natural lift of $f$ to $\gl{M}$ given by applying the tangent map of $f$ to each element of each frame $e\in\gl{M}$. We denote such a lift by the symbol $Tf$. The pullback bundle $Tf^*\glt{M}$, defined explicitly by $Tf^*\glt{M}=\{(e,u)\in\gl{M}\times\glt{M}\mid Tf(e)=\eta(u)\}$, together with the canonical map $Tf^*\eta\colon Tf^*\glt{M}\to\gl{M}$ given by $(Tf^*\eta)(e,u)=e$ is again a spin structure on $M$ which we call $f^*\sigma$. By construction the map $Tf\colon\gl{M}\to\gl{M}$ admits exactly two distinct lifts, given by $\varphi^\pm(e,u)=u(\pm1)$ where $\{\pm1\}=\ker\tau\subset Z(\GLt{n})$. The following diagram illustrates the situation.
\begin{displaymath}
\xymatrix{Tf^*\glt{M} \ar[r]^{\varphi^\pm} \ar[d]^{Tf^*\eta} & \glt{M} \ar[d]^\eta\\
    				\gl{M}\ar[r]^{Tf} \ar[d] & \gl{M} \ar[d]\\
				M \ar[r]^{f} & M}
\end{displaymath}
Recall that the set $\Sigma_M$ of equivalence classes of spin structures on $M$ is naturally an affine space over the $\zedtwo$-vector space $H^1(M;\zedtwo)$. The assignment $\diff{M}\times \Sigma_M\to\Sigma_M$ given by $(f,[\sigma])\mapsto [f^*\sigma]$ defines an affine representation $\rho$ of $\diff{M}$ on $\Sigma_M$ (see \cite{DabPer86} for a proof). Moreover, the normal subgroup $\diffo{M}\subset\diff{M}$ of diffeomorphisms which are homotopy equivalent to the identity acts trivially on $\Sigma_M$, hence $\rho$ descends to a representation of $\Omega(M)=\diff{M}/\diffo{M}$ on $\Sigma_M$.

\section{Diffeomorphisms and spinors}
This section explores the relation between diffeomorphisms of $M$ and the system of spaces $\HH_{\sigma, g}$. Let us start with a spin structure $\sigma=(\glt{M},\eta)$ of $M$ and a metric $g$ on $M$. Given an orientation-preserving diffeomorphism $f\in\diff{M}$ we can consider the pullback metric $f^*g$ on $M$ defined by $(f^*g)(v,w)=g(Tfv,Tfw)$. The map $Tf\colon \gl{M}\to\gl{M}$ restricts to a lift $Tf_g\colon \so{M}{f^*g}\to \so{M}{g}$ by construction. The pullback spin structure $f^*\sigma$ restricts to a $\Spin{n}$-prolongation of $\so{M}{f^*g}$ by considering $(Tf^*\eta)^{-1}(\so{M}{f^*g})$ with $\Spin{n}$-action obtained by restricting the $\GLt{n}$-action on $Tf^*\GLt{M}$ to the subbundle $(Tf^*\eta)^{-1}(\so{M}{f^*g})$. There are exactly two lifts $\varphi_g^\pm\colon (Tf^*\eta)^{-1}(\so{M}{f^*g})\to \eta^{-1}(\so{M}{g})$, given by restriction of $\varphi^\pm$. The following diagram illustrates the situation.
\begin{displaymath}
\xymatrix{(Tf^*\eta)^{-1}(\so{M}{f^*g}) \ar@{^{(}->}[r] & Tf^*\glt{M} \ar[r]^{\varphi^\pm} \ar[d]^{Tf^*\eta} & \glt{M} \ar[d]^\eta & \eta^{-1}\so{M}{g}\ar@{_{(}->}[l] \\
    				\so{M}{f^*g} \ar@{^{(}->}[r] &\gl{M}\ar[r]^{Tf} \ar[d] & \gl{M} \ar[d]&\so{M}{g}\ar@{_{(}->}[l] \\
				&M \ar[r]^{f} & M}
\end{displaymath}
The next definition and proposition generalize the analysis in Section \ref{sec:spinorfields} to the case of changing the metric from $g$ to $f^*g$.
\begin{defn}
For each of the two lifts $\varphi^\pm$ of $Tf$ we define a linear operator $U_{\varphi^\pm}\colon \HH_{\sigma, g}\to\HH_{f^*\sigma, f^*g}$ by 
\begin{equation}\label{unitary}
U_{\varphi^\pm}\psi=\psi\circ \varphi_{f^*g}^\pm
\end{equation}
where $\varphi_{f^*g}^\pm=\varphi^\pm\restriction{(Tf^*\eta)^{-1}(\so{M}{f^*g})}$.
\end{defn}
\begin{prop}\label{prop:unitary}
The operators $U_{\varphi^\pm}$ defined above are unitary, that is they are invertible and satisfy $\langle U_{\varphi^\pm}\psi\mid U_{\varphi^\pm}\phi\rangle_{f^*\sigma, f^*g}=\langle \psi\mid\phi\rangle_{\sigma, g}$ for each $\psi,\phi\in\HH_{\sigma,g}$.
\end{prop}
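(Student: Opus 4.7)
The plan is to verify in three steps that $U_{\varphi^\pm}$ is a well-defined unitary isomorphism: (i) $U_{\varphi^\pm}\psi$ is an $(f^*\sigma,f^*g)$-spinor field, (ii) $U_{\varphi^\pm}$ is invertible, (iii) it preserves the inner product. For (i), the key point is the $\Spin{n}$-equivariance of $\varphi_{f^*g}^\pm$. Equipping $Tf^*\glt{M}$ with the natural $\GLt{n}$-action $(e,u)\cdot h=(e\tau(h),uh)$, which is compatible with the defining condition $Tf(e)=\eta(u)$, one computes
\begin{equation*}
\varphi^\pm\bigl((e,u)\cdot h\bigr) = u h (\pm 1) = u(\pm 1)\, h = \varphi^\pm(e,u)\cdot h,
\end{equation*}
using the centrality of $\pm 1$ in $\GLt{n}$; restricting to $\Spin{n}$ yields the desired equivariance, so $U_{\varphi^\pm}\psi$ is genuinely $\mu$-equivariant.

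For (ii), since $f$ is a diffeomorphism and $\varphi_{f^*g}^\pm$ intertwines the principal actions while covering $f$, it is an isomorphism of principal $\Spin{n}$-bundles from $(Tf^*\eta)^{-1}(\so{M}{f^*g})$ onto $\eta^{-1}(\so{M}{g})$; hence $U_{\varphi^\pm}$ admits the two-sided inverse $\chi\mapsto\chi\circ(\varphi_{f^*g}^\pm)^{-1}$. For (iii), I would compute the inner product locally: given a local section $\tilde u$ of $(Tf^*\eta)^{-1}(\so{M}{f^*g})$, the point $\varphi_{f^*g}^\pm(\tilde u(x))$ lies in $\eta^{-1}(\so{M}{g})$ over $f(x)$, so
\begin{equation*}
a_{U_{\varphi^\pm}\psi,\,U_{\varphi^\pm}\phi}(x) = a_{\psi,\phi}(f(x)).
\end{equation*}
Combined with $\vol{M}{f^*g}=f^*\vol{M}{g}$ (since $Tf$ sends positively oriented $f^*g$-orthonormal frames at $x$ to positively oriented $g$-orthonormal frames at $f(x)$, and both forms evaluate to $1$ on such a frame), the standard change-of-variables formula under the orientation-preserving diffeomorphism $f$ gives
\begin{equation*}
\langle U_{\varphi^\pm}\psi\mid U_{\varphi^\pm}\phi\rangle_{f^*\sigma,f^*g} = \int_M (a_{\psi,\phi}\circ f)\,f^*\vol{M}{g} = \int_M a_{\psi,\phi}\,\vol{M}{g} = \langle\psi\mid\phi\rangle_{\sigma,g},
\end{equation*}
and density of $S_{\sigma,g}$ in $\HH_{\sigma,g}$ then extends the identity to the completion.

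I expect the only delicate point to be the identification $\vol{M}{f^*g}=f^*\vol{M}{g}$ together with the clean bookkeeping of the base points of spinors under the lifts; the fiber-over-$x$ of the pullback spin structure really sits over $f(x)$ in the target, and this is what makes $a_{U\psi,U\phi}=a_{\psi,\phi}\circ f$ work cleanly. The centrality of $\pm 1$ in $\GLt{n}$ is what makes both $\varphi^+$ and $\varphi^-$ equally admissible and produces exactly two unitaries, in accord with the preceding discussion of two-fold lifts. Once the volume pullback identity is in hand, the remainder of the argument is an essentially tautological consequence of the equivariance of the bundle lifts $\varphi^\pm$.
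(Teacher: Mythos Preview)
Your proof is correct and follows essentially the same route as the paper's: show $a_{U_{\varphi^\pm}\psi,\,U_{\varphi^\pm}\phi}=a_{\psi,\phi}\circ f$, identify the volume forms via $f^*$, and apply the change-of-variables formula. You are in fact slightly more thorough than the paper, which omits the well-definedness check (your step (i)) and the explicit justification of $\vol{M}{f^*g}=f^*\vol{M}{g}$.
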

\begin{proof}
Linearity is clear. The inverse is given by $\psi\mapsto \psi\circ (\varphi_{f^*g}^\pm)^{-1}$. For the second part: let us consider $\varphi_{f^*g}^+$, the case $\varphi_{f^*g}^-$ being analogous. Put $\psi':=\psi\circ \varphi_{f^*g}^+$, $\phi':=\phi\circ \varphi_{f^*g}^+$. An easy computation shows that $a_{\psi',\phi'}=a_{\psi,\phi}\circ f$. Now apply the formula for the invariance of integrals under pullback:
\begin{equation}
\begin{aligned}
\langle U_{\varphi^+}\psi\mid U_{\varphi^+}\phi\rangle_{f^*\sigma, f^*g}&=\int_M a_{\psi',\phi'} \;e'^1\wedge\cdots\wedge e'^n\\
&=\int_M (a_{\psi,\phi} \circ f) \;e'^1\wedge\cdots\wedge e'^n\\
&=\int_M f^*\left(a_{\psi,\phi} \;e^1\wedge\cdots\wedge e^n\right)\\
&=\int_M a_{\psi,\phi} \;e^1\wedge\cdots\wedge e^n\\
&=\langle\psi\mid\phi\rangle_{\sigma,g}
\end{aligned}
\end{equation}
where we used local sections $e'\colon U_\alpha\to \so{M}{f^*g}(U_\alpha)$ and $e:=Tfe'$.
\end{proof}
A remark similar to \ref{rmk:twolifts} holds here as well. In other words, the operators $U_{\varphi^\pm}$ are the only two unitary operators $\HH_{\sigma, g}\to\HH_{f^*\sigma, f^*g}$ which arise from some principal morphism as above.

The above results permit to introduce a certain covering of the group of diffeomorphisms. We restrict to the case of oriented diffeomorphisms preserving a given spin structure.
\begin{defn}
Let $\diffspin{M}{\sigma}$ be the subgroup of $\diff{M}$ consisting of diffeomorphisms which preserve the spin structure $\sigma=(\glt{M}, \eta)$. Define the group $\diffspint{M}{\sigma}$ to consist of all principal $\GLt{n}$-morphisms $\varphi\colon\glt{M}\to\glt{M}$ closing the diagram:
\begin{displaymath}
\xymatrix{\glt{M} \ar[r]^{\varphi} \ar[d]^{\eta} & \glt{M} \ar[d]^\eta\\
    				\gl{M}\ar[r]^{Tf} \ar[d] & \gl{M} \ar[d]\\
				M \ar[r]^{f} & M}
\end{displaymath}
where $f$ runs over $\diffspin{M}{\sigma}$, together with the multiplication given by composition of maps.
\end{defn}

It is clear that $\diffspint{M}{\sigma}$ is a double cover of $\diffspin{M}{\sigma}$ by the map $\pi_\sigma(\varphi)= f$. The corresponding operators $U_{\varphi^\pm}$ given by \eqref{unitary} implement -- in a generalized sense -- the action on spinor fields of the double cover $\diffspint{M}{\sigma}$ of oriented, spin structure preserving diffeomorphisms. This is however not an implementation in the strict sense, as we have not really an action on a fixed space of spinors but rather the target space of spinors changes according to the pull back action of $f$ on the metric.

In order to get of a genuine action one should develop further our setting. A possible way could be to consider the disjoint union $\mathcal{C}_\sigma=\amalg_g \HH_{\sigma, g}$ where $g$ runs over all Riemannian metrics on $M$. The (right) action is then given by:
\begin{equation}
\mathcal{C}_\sigma\times \diffspint{M}{\sigma}\to\mathcal{C}_\sigma,\; 
(\psi,\varphi)\mapsto \psi\cdot\varphi:=\psi\circ\varphi_{f^*g}
\end{equation}
where $\psi\in\HH_{\sigma, g}$, $f=\pi_\sigma(\varphi)$ and $\varphi_{f^*g}=\varphi\restriction{\eta^{-1}(\so{M}{f^*g})}$.
\begin{rmk}
By Proposition \ref{prop:unitary}, this action is `fiberwise unitary' in the sense that it is linear on each component $\HH_{\sigma, g}$, it is invertible and $\langle \psi\cdot\varphi \mid \chi\cdot\varphi\rangle_{\sigma, f^*g}=\langle \psi\mid\chi\rangle_{\sigma, g}$ for each $\psi,\chi\in\HH_{\sigma, g}$.
\end{rmk}
In order to speak of a unitary action of $\diffspint{M}{\sigma}$ one should put a Hilbert space structure on $\mathcal{C}_\sigma$. It would be natural to view $\mathcal{C}_\sigma$ as a direct integral of Hilbert spaces over the space $\met{M}$ of Riemannian metrics on $M$. We hope it can be made rigorous using the following facts.  First, $\met{M}$ is a positive convex cone into the vector space of smooth symmetric $(0,2)$-tensors on $M$. The latter is naturally a Fr\'echet space by equipping it with the smooth topology. The space $\met{M}$ is open in that space, hence it inherits the structure of a Fr\'echet manifold. The tangent space of $\met{M}$ at some $g\in\met{M}$ can be identified with the vector space of smooth symmetric $(0,2)$-tensors on $M$. A Riemannian metric $\mu$ can be put on $\met{M}$ which is invariant under the action of diffeomorphisms of $M$ by pullback, $g\mapsto f^*g$. Given $\varphi\in \diffspint{M}{\sigma}$, we could then define for each $\psi\in \int^\oplus_{\met{M}} \HH_{\sigma, g}\textrm{d}\mu$ the element $(\mathfrak{U}_\varphi\psi)(g)=U_\varphi\psi(\pi_\sigma(\varphi^{-1})^*g)$, where we denote by $\textrm{d}\mu$ the induced invariant measure on $\met{M}$. The assignment $\varphi\mapsto \mathfrak{U}_\varphi$ would then be a unitary action of $\diffspint{M}{\sigma}$ on the space of spinors with spin structure $\sigma$.

\section{Equivariance of the Dirac operator}

In order to define the Dirac operator one uses the lift of the covariant derivative associated to the Levi-Civita (metric preserving and torsion free) connection. Its local components with respect to an orthonormal frame $e$ are given explicitly by the Christoffel symbols
\begin{equation}\label{connection}
\Gamma^{(e(x))}_{jkl} = c_{jkl} + c_{jlk} + c_{lkj}
\end{equation}
where $c_{ijk}$ are the structure constants of the commutators (as vector fields)
$$[ e_i, e_j ] =   c_{ijk} e_k .$$

Then for a given $\sigma$ and $g$ on $M$ the Dirac operator $D$ is defined by its local components, i.e. its action in the 'gauge' $\tilde e$ on the local components $\psi \circ \tilde{e}$ of  $\psi\in S_{\sigma, g}$ as 
$$
(D\psi) (\tilde e(x)) := \sum_j \gamma_j
\left({\cal L}_{e_j(x)} + 
\frac{1}{4} \sum_{k\ell}\gamma_k\gamma_\ell\Gamma^{(e(x))}_{jkl} \right)
\psi (\tilde e(x)),
$$
where $\gamma_j$ are the anticommuting gamma matrices and $e=\eta \circ \tilde e$.
As it should, up to a unitary equivalence the Dirac operator is independent on the choice of a representation of the gamma matrices and of local orthonormal frames.

\begin{prop}\label{prop:equivariance}
The Dirac operator is equivariant, i.e.
\begin{equation}\label{eq:cov}
D' U^\pm_f = U^\pm_f D\;,
\end{equation}
where $D'$ is the Dirac operator on $\HH_{f^*\sigma,f^*g}$.
\end{prop}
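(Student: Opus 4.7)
The plan is to verify the identity pointwise on a local spinor frame, reducing everything to the defining formula for $D$ given in the excerpt. Fix $x\in M$ and pick a local section $\tilde e'$ of $(Tf^*\eta)^{-1}(\so{M}{f^*g})$ near $x$; then $\tilde e:=\varphi^+\circ\tilde e'$ is (when composed with $f^{-1}$) a local section of $\eta^{-1}(\so{M}{g})$ near $f(x)$, and the associated orthonormal frames satisfy the key relation $e_j(f(x))=Tf\bigl(e'_j(x)\bigr)$. With these choices, the definition of $U^+_f$ gives the simple identity
\begin{equation*}
(U^+_f\psi)(\tilde e'(x)) \;=\; \psi\bigl(\varphi^+_{f^*g}(\tilde e'(x))\bigr) \;=\; \psi\bigl(\tilde e(f(x))\bigr),
\end{equation*}
so that evaluating the unitary amounts to precomposing the gauge representative $\psi\circ\tilde e$ with $f$.

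The next step is to check that the two ingredients of the local Dirac formula transform correctly under this relation. For the derivative term, the chain rule together with $Tf\cdot e'_j(x)=e_j(f(x))$ gives
\begin{equation*}
\mathcal{L}_{e'_j(x)}\bigl(\psi\circ\tilde e\circ f\bigr) \;=\; \bigl(\mathcal{L}_{e_j}(\psi\circ\tilde e)\bigr)(f(x)).
\end{equation*}
For the connection term, I would observe that $f:(M,f^*g)\to(M,g)$ is by construction an isometry, so it intertwines the two Levi-Civita connections; in the concrete language of \eqref{connection}, the commutator identity $[Tf\cdot e'_i,Tf\cdot e'_j]=Tf\cdot[e'_i,e'_j]$ shows that the structure constants satisfy $c'_{ijk}(x)=c_{ijk}(f(x))$, and therefore $\Gamma'^{(e'(x))}_{jkl}=\Gamma^{(e(f(x)))}_{jkl}$. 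The gamma matrices $\gamma_j$ are constants fixed once and for all and play no role in the transformation.

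Combining these two observations and using that the representation $\mu$ and the $\gamma_j$'s are identical on both sides, the local Dirac formula yields
\begin{equation*}
(D'U^+_f\psi)(\tilde e'(x))\;=\;(D\psi)(\tilde e(f(x)))\;=\;(D\psi)\bigl(\varphi^+_{f^*g}(\tilde e'(x))\bigr)\;=\;(U^+_f D\psi)(\tilde e'(x)),
\end{equation*}
which is the required equality for the $+$ lift. The argument for $\varphi^-$ is identical: the two lifts differ by the central element $-1\in\ker\tau$, and $\mu$-equivariance absorbs this sign in $\psi$ without altering any of the above calculations. Independence of the computation from the choice of local frame and trivialization is guaranteed by the already-noted covariance of the local Dirac formula under change of gauge.

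The main obstacle I anticipate is the bookkeeping required in the connection term: one must make sure that the base-point shift from $x$ to $f(x)$ is handled consistently in both the Lie-derivative part and the Christoffel part, and that the identification of $\Gamma'$ with the pullback of $\Gamma$ really follows from naturality of Levi-Civita rather than from some ad hoc cancellation. Once the isometry property of $f:(M,f^*g)\to(M,g)$ is invoked cleanly, the rest is a direct substitution.
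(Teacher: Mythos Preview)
Your proposal is correct and follows essentially the same route as the paper: both verify \eqref{eq:cov2} by evaluating on a local spinor frame $\tilde e'(x)$, using the component identity $\psi'(\tilde e'(x))=\psi(\tilde e(f(x)))$, the chain-rule relation $\mathcal{L}_{e'_j}(\psi\circ f)(x)=\mathcal{L}_{e_j}\psi(f(x))$ for $f$-related frames, and the fact that the Christoffel symbols \eqref{connection} are built from commutators of $f$-related frames and hence satisfy $\Gamma'^{(e'(x))}_{jkl}=\Gamma^{(e(f(x)))}_{jkl}$. Your write-up simply spells out these steps in more detail than the paper's one-paragraph sketch.
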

Denoting 
$\psi' := U^\pm_f \psi $
we can write \eqref{eq:cov} in the equivalent form as 
\begin{equation}\label{eq:cov2}
D'\psi' = (D\psi)'
\end{equation}

\begin{proof}
It is a matter of a straightforward check that \eqref{eq:cov2} is satisfied. For that evaluate both sides on $\tilde e'(x)$ using the fact that ${\cal L}_{e'_j}(\psi\circ f)(x)= {\cal L}_{e_j}\psi(f(x))$, that the local Christoffel symbols in any orthonormal frame are given in terms of commutators of the vectors constituting the frame and the commutators of $f$-related frames are $f$-related, and the equality of local components $\psi'(\tilde e'(x))= \psi(\tilde e(f(x)))$.
\end{proof}

From the formula \eqref{eq:cov2}, which is already present (modulo a typo) in \cite{Dab88} (p.101 at the bottom), follows that the eigenvalues (point spectrum) of the Dirac operator are invariant under diffeomorphisms.
Now using also Proposition \eqref{prop:unitary} we can state a stronger result:
\begin{cor}
The spectrum of the Dirac operator is invariant under the diffeomorphisms.
\end{cor}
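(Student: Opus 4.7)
The plan is to combine the intertwining identity $D' U_f^\pm = U_f^\pm D$ from Proposition \ref{prop:equivariance} with the unitarity of $U_f^\pm$ from Proposition \ref{prop:unitary} to conclude that $D$ and $D'$ are unitarily equivalent as self-adjoint operators on their respective Hilbert spaces. The desired equality of spectra will then follow from the standard fact that unitary equivalence preserves the entire spectrum (point, absolutely continuous and singular continuous parts alike), not merely the point part.

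More concretely, I would first invoke the classical result (going back to Wolf and Chernoff) that on a complete Riemannian spin manifold the Dirac operator, initially defined on smooth compactly supported spinors, is essentially self-adjoint; denote its closure again by $D$, and similarly $D'$ for the data $(f^*\sigma, f^*g)$. Because $f\in\diff{M}$ is a smooth diffeomorphism with smooth inverse, the pullback map $U_f^\pm$ of Proposition \ref{prop:unitary} restricts to a linear bijection between the respective smooth compactly supported cores.

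Next, I would observe that conjugation of the (closed) operator $D$ by the unitary $U_f^\pm$ produces a self-adjoint operator $U_f^\pm D (U_f^\pm)^{-1}$ on $\HH_{f^*\sigma, f^*g}$; by Proposition \ref{prop:equivariance} it agrees with $D'$ on the essentially self-adjoint smooth core, so by uniqueness of self-adjoint closures one has $D' = U_f^\pm D (U_f^\pm)^{-1}$. The spectra of these two operators therefore coincide, which is precisely the corollary.

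The main obstacle is purely functional-analytic, namely the passage from the smooth intertwining of Proposition \ref{prop:equivariance} to an honest identity of closed operators. For $M$ closed (compact without boundary) this is immediate, since $D$ is then a first-order elliptic operator with discrete point spectrum and the corollary reduces essentially to the invariance of eigenvalues already noted from \eqref{eq:cov2}. For noncompact $M$ one must appeal to essential self-adjointness on the smooth core, which in turn requires completeness of $(M,g)$ (or an equivalent hypothesis) that the statement tacitly takes for granted.
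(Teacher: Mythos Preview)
Your proposal is correct and follows essentially the same route as the paper: combine the intertwining relation of Proposition~\ref{prop:equivariance} with the unitarity of $U_f^\pm$ from Proposition~\ref{prop:unitary} to obtain unitary equivalence of $D$ and $D'$, whence equality of spectra. The paper in fact offers no argument beyond pointing to these two propositions; your additional care about essential self-adjointness, cores, and the completeness hypothesis makes explicit the functional-analytic step that the paper leaves implicit.
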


\section{Brief account of some other approaches}

The relations between spinors, the Dirac equation and the metric has been investigated by other authors. Here we mention only that a particular geometrically constructed ('based' or 'vertical') bundle isomorphism $m_{g_1,g_2}\colon\so{M}{g_1}\to\so{M}{g_2}$ for any two Riemannian metrics $g_1$ and $g_2$ on $M$ can be constructed (e.g. \cite{MilSta74}, \cite{BinPfe83}). A lift of this isomorphism to respective spin bundles makes possible to compare spinors and the Dirac equation for different metrics. However, the two spin structures are necessarily equivalent.

The paper \cite{BouGau92} combines ideas in \cite{BinPfe83} and \cite{DabPer86} to construct a ``metric'' Lie derivative of spinor fields. As in \cite{DabPer86}, the metrics considered are $g$ and $f^*g$. The isomorphism $m_{g,f^*g}$ is used to project the tangent map of a diffeomorphism $f\colon M\to M$ onto the same principal bundle $\so{M}{g}$; an analogous isomorphism between the two principal $\Spin{n}$-bundles associated to $g$ and $f^*g$ is used to project the lift of $f$, thus realising an automorphism of the same $\Spin{n}$-bundle over $f$. This permits to define a Lie derivative of spinor fields, which however does not induce the canonical Lie derivative on tensor fields build from spinor fields. Its geometric nature has been clarified in \cite{GodMat03}. The procedure above works however only for strictly Riemannian metrics and spin-structure-preserving diffeomorphisms.

It is worth to mention that the canonical Dirac operator on Dirac spinor fields provides a prominent example of a spectral triple, and of a noncommutative Riemannian spin manifold in the framework of noncommutative geometry of Connes \cite{Con94}. The results of this paper fit well into this scheme and can be interpreted as a unitary implementation of diffeomorphisms on spectral triples. Concerning the additional requirements (axioms) for noncommutative Riemannian spin manifolds \cite{Con96}, most of them are preserved under diffeomorphisms in a straightforward manner. Only the axiom of projectivity and absolute continuity requires a comment. Namely it is easy to check that the $C^\infty(M)$-modules of smooth spinor fields, equipped with the $C^\infty(M)$-valued hermitian form, are intertwined by the action of diffeomorphisms, i.e 
$$(a \psi)\circ \tilde f_\pm = (f^*a) (\psi \circ \tilde f_\pm).$$

\section{Final remarks}

In this paper we have further developed the approach of \cite{DabPer86} and \cite{Dab88} to give a consistent definition of the transformation rules for spinor fields under (the double cover of) diffeomorphisms and checked the covariance of the Dirac operator. This requires however, as mentioned in the introduction, the changing of the space of spinors according to the pull back action on metrics and on spin structures labelling the spaces of spinors. In particular we are able to give the components of the transformed spinor field with respect to the transformed (spinor) linear frame, orthonormal with respect to the different (pullback) metric. It should be stressed however that we cannot compare the components of a given one and the same spinor field with respect to two linear frames if they are not related by a orthonormal transformation (giving a proper scaling dimension we could treat however conformally related metrics).

Since we have not employed an isomorphism between Hilbert spaces associated to different Riemannian metrics, we can not discuss in general the behaviour of spinors under infinitesimal diffeomorphisms and the notion of the Lie derivative on spinor fields along vector fields (unless they are Killing vector fields).

Moreover, for simplicity we considered only the oriented diffeomorphisms, the orientation changing diffeomorphisms in general would require the coverings (there are two) of the full orthogonal group, known as $Pin_{\pm}$. Some parts of our results hold as well in the Lorenzian or pseudoriemannian case. The isomorphism $\varphi$ should also play an important role for a rigorous discussion of the variational aspects of the theory (under a general variation of the metric), and thus for deriving the equation of motions.

\bibliographystyle{siam}

\end{document}